\def\maxwidth{ %
  \ifdim\Gin@nat@width>\linewidth
    \linewidth
  \else
    \Gin@nat@width
  \fi
}
\definecolor{fgcolor}{rgb}{0.345, 0.345, 0.345}
\definecolor{shadecolor}{rgb}{.97, .97, .97}
\definecolor{messagecolor}{rgb}{0, 0, 0}
\definecolor{warningcolor}{rgb}{1, 0, 1}
\definecolor{errorcolor}{rgb}{1, 0, 0}
\newtheorem{thm}{Theorem}
\newtheorem{lem}{Lemma}
\newcommand\R{{\sf I\kern-0.1em R}}
\newcommand{\mathsym}[1]{{}}
\newcommand{\revD}[1]{{\color{red} #1}}
\begin{document}
%\RUNAUTHOR{Slippery and Arinella}
\title{An improved approximation algorithm for maximizing a DR-submodular function over a convex set}
\author[1]{Donglei Du}
\affil[1]{Faculty of Management, University of New Brunswick, Fredericton, New Brunswick,  E3B 9Y2, Canada\\ E-mail: {\tt ddu@unb.ca} }
\author[2]{Zhicheng Liu}
\affil[2]{College of Taizhou, Nanjing Normal University, Taizhou, 225300, P. R. China\\  E-mail: {\tt manlzhic@163.com} }
\author[3]{Chenchen Wu~\thanks{Corresponding author}}
\affil[3]{College of Science, Tianjin University of Technology, No.391, Binshui West Street,
Xiqing District, Tianjin, P. R. China\\ E-mail: {\tt wu\_chenchen\_tjut@163.com} }
\author[4]{Dachuan Xu}
\affil[4]{Beijing Institute for Scientific and Engineering Computing, Beijing University of Technology, Beijing 100124, P.R. China\\ E-mail: {\tt  xudc@bjut.edu.cn} }
\author[5]{Yang Zhou}
\affil[5]{School of Mathematics and Statistics, Shandong Normal University, Jinan, Shandong, P. R.China\\ E-mail: {\tt zhyg1212@163.com}}

%\currenttime
%\date{1st draft: Jan. 28, 2022\\
%2nd draft: Feb. 12, 2022\\
%rnd draft: March 19, 2022
%}

\maketitle

\begin{abstract} Maximizing a DR-submodular function subject to a general convex set is an NP-hard problem arising from many applications in combinatorial optimization and machine learning. While it is highly desirable to design efficient approximation algorithms under this general setting where neither the objective function is monotonic nor the feasible set is down-closed, unfortunately, \cite{Vondrak2013} shows that such a problem admits no constant approximation under the value oracle model. Our main contribution is to present a $\frac{1}{4}=0.25$-approximation Frank-Wolfe type of algorithm with a sub-exponential time-complexity under the value oracle model, improving a previous approximation ratio of $\frac{1}{3\sqrt{3}}\approx 0.1924$ with the same order of time-complexity by \cite{durr2021non}.
%In the process, we also introduce a few new ideas of independent interests, including the exponential integrator method to discretize the ordinary differential equation (ODE) in the continuous-time Frank-Wolfe algorithm, and the Lyapunov function approach to streamline the analysis of the approximation ratio and/or the convergence rate in both the continuous and discrete settings.
\end{abstract}

\textbf{Keyword}: DR-submodular, $L$-smooth, approximation algorithm
\section{Introduction}
The main problem we are interested in this work is as follows:
%\begin{tcolorbox}[title=The Problem]%, colframe=red!50!]
\[
x^*\in\arg\max_{x\in P\subseteq [0,1]^n} F(x).
\]
%\tcblower
In the above problem, we make the following assumptions:
\begin{enumerate}[(i)]
\item $P\subseteq [0,1]^n$ is a convex set, $\mathbf{0}=(0,\ldots, 0)\in P$ and $F(\mathbf{0})=0$.
\item $F:P\mapsto \mathbb{R}^{+}$ is $L$-smooth: $\forall x, y\in P$,
\[
|F(y)- F(x)|-\langle\nabla F(x), y-x\rangle\le \frac{L}{2}\|y-x\|^{2}\iff \|\nabla F({x})-\nabla F({y})\| \leq L\|{x}-{y}\|.
\]
\[
\bigg\Updownarrow
\]
\begin{eqnarray*}
F(x)+\langle\nabla F(x), y-x\rangle- \frac{L}{2}\|y-x\|^{2} \le F(y)\le F(x)+\langle\nabla F(x), y-x\rangle+ \frac{L}{2}\|y-x\|^{2}
\end{eqnarray*}
\item $F:[0,1]^n\mapsto \mathbb{R}^{+}$ is a non-negative DR-submodular function: $\forall x\ge y\in [0,1]^n; \forall a>0: x+ae_i, y+ae_i\in [0,1]^n$,
\[
F(x+ae_i)-F(x)\ge F(y+ae_i)-F(y), \forall i=1,\ldots, n
\]
\[
\bigg\Updownarrow
\]
\[
\nabla F(x)\ge \nabla F(y).%, \forall x\ge y\in [0,1]^n
\]
\item Three oracles are available: (i) computing the function value $F(x)$ at any given point $x\in P$, (ii) computing its gradient $\nabla F(x)$ at any given point $x\in P$, and (iii) maximizing a linear function over $P$.\qed
\end{enumerate}
%\end{tcolorbox}

Note that previous research studying similar problems have been assuming either monotonicity of the objective function $F$ and/or down-closeness of the feasible set $P$. The main reasons are two-fold: on the negative side, lack of either assumption usually means no constant approximation algorithm exists under the value oracle model, as shown by \cite{Vondrak2013}; and on the positive side, constant approximation algorithms normally exist when one of the two assumptions holds for submodular maximization problems under many different constraints (see e.g., \cite{buchbinder2018submodular} for an excellent survey on this line of research among many others.).

In this work, we consider the problem without any of the two assumptions. This problem was first studied in \citep{durr2021non} who design an $\alpha$-approximation algorithm with convergence rate $\beta$:
\[
F(x)\ge \alpha F(x^*)-\beta.
\]
In particular, \citet{durr2021non} show that, under the previous assumptions (i)-(iv), there exists a Frank-Wolfe type of algorithm with
\[
\alpha=\frac{1}{3\sqrt{3}}\approx 0.1924, \beta=O\left(\frac{nL}{\ln^2 T} \right).
\]
where $T$ is the number iterative steps in the algorithm. Therefore, after $T=O\left(e^{\sqrt{\frac{nL}{\varepsilon}}}\right)$ iterations, we can obtain $\beta=\varepsilon$. Note that the sub-exponential convergence rate (namely, $T=2^{o(n)}$) is necessary due to the aforementioned inapproximability result in \citep{Vondrak2013}.

This work contributes along this line of research to improve the approximation ratio to 1/4 while maintaining the same order of convergence rate:
\begin{equation}\label{eq:intro_main}
\alpha=\frac{1}{4}=0.25, \beta=O\left(\frac{nL}{\ln^2 T} \right).
\end{equation}
Therefore, after $T=O\left(e^{\sqrt{\frac{nL}{\varepsilon}}}\right)$ iterations, our algorithm makes the $\beta$-term arbitrarily small: $\beta=\varepsilon$ with an improved approximation ratio of 1/4.

This problem finds applications in price optimization \citep{ito2016large} and revenue maximization on a (undirected) social network graph (a.k.a., optimal seeding) \citep{soma2017non}. Interested readers are refereed to \citep{niazadeh2018optimal,bian2020continuous} and references therein for many applications of this type of submodular maximization problems.

The rest of this paper is organized as follows.
%In Section~\ref{sec:fw_cont}, we present the continuous-time  Frank-Wolfe algorithm and analyze its approximation ratio.
In Section~\ref{sec:fw_dis}, we present the Frank-Wolfe algorithm and analyze its approximation ratio along with its convergence rate. Concluding remarks are provided in Section~\ref{sec:conclusion}

\section{The Frank-Wolfe algorithm}\label{sec:fw_dis}
% In the following algorithm, We apply the exponential integrator method to present a discrete-time implementation of Algorithm \textsc{FW-Con} by considering only $k=2$ as an illustration, while other $k\in \left(1,1+\frac{1}{\ln 2}\right)\approx (1, 2.442695)$ can be implemented analogously.

Given a DR-submodular function $F:[0,1]^n\mapsto \mathbb{R}$, and a convex set $P\subseteq [0,1]^n$, the Frank-Wolfe algorithm is summarized as follows:
\begin{algorithm}[H]
    \caption{Frank-Wolfe algorithm}
    \label{alg:FW}
    \begin{algorithmic}[1] % The number tells where the line numbering should start
        \Procedure{FW-Dis}{$F,P, T$} %\Comment{}
        \State $x(0)=\mathbf{0}=(0,\ldots, 0)\in\mathbb{R}^n$ \Comment{Initialization: $\mathbf{0}\in P$ due to assumption}
        \For{$j=0,1\ldots, T-1$} \Comment{$T$ is the number of iterations}
        \begin{eqnarray}
        \label{eq:fw_discrete3}t_j&=&\frac{1}{1-\ln \left(1+\frac{H_j}{H_T}\right)}-1 \\
        \label{eq:fw_discrete2} v(x(t_j))&\in&\arg\max_{v\in P}\left\langle \nabla F(x (t_j)),v\right\rangle\\
        \label{eq:fw_discrete1}x(t_{j+1})&=&e^{-\frac{1}{1+t_j}+\frac{1}{1+t_{j+1}}}x(t_j)+\left(1-e^{-\frac{1}{1+t_j}+\frac{1}{1+t_{j+1}}}\right)v(x(t_j))
        % \label{eq:fw_discrete1}x(t_{j+1})&=&x(t_j)\underbrace{e^{-\frac{1}{1+t_j}+\frac{1}{1+t_{j+1}}}}_{\sqrt{\frac{a_{t_j}}{a_{t_{j+1}}}}=\frac{H_j+H_T}{H_{j+1}+H_T}}+\left(1-e^{-\frac{1}{1+t_j}+\frac{1}{1+t_{j+1}}}\right)v(x(t_j))
      \end{eqnarray}
\EndFor\label{FW-Disfor}
            \State \textbf{return} $ x \in \arg\max_{x(t_j)} F(x (t_j)) $ \Comment{$t_T=\frac{\ln 2}{1-\ln 2}\approx 2.258891$}
        \EndProcedure
    \end{algorithmic}
\end{algorithm}

The following comments are in order:
 \begin{enumerate}[(1)]
\item To save space in the future analysis, denote
 \begin{eqnarray*}
\sqrt{a_{t_j}}&:=&e^{\frac{t_j}{1+t_j}}.
\end{eqnarray*}
 \item Note that
 \begin{eqnarray}
\label{eq:fw_discrete_sqrt_a_t} (\ref{eq:fw_discrete3})&\iff &\sqrt{a_{t_j}}:=e^{\frac{t_j}{1+t_j}}=1+\frac{H_j}{H_T},
\end{eqnarray}
where $H_j$ is the harmonic number:
\begin{eqnarray}
\label{eq:fw_discrete_harmonic1}H_{T}&=&\sum_{j=1}^T\frac{1}{j}=\Theta(\ln T),\\
\label{eq:fw_discrete_harmonic2} H_{2,T}&=&\sum_{j=1}^T\frac{1}{j^2}=\Theta(1).
\end{eqnarray}
We will also use the last quantity $H_{2,T}$, which is the finite sum of the reciprocals of the first $T$ squared natural numbers, in the convergence rate analysis (see the proof of Theorem~\ref{thm:main_result_discrete} in Section~\ref{subsec:approx_discrette}.).
\item Moreover, the step-size in (\ref{eq:fw_discrete1}) and the notation $a_t$ are related as follows:
\[
e^{-\frac{1}{1+t_j}+\frac{1}{1+t_{j+1}}}=\sqrt{\frac{a_{t_j}}{a_{t_{j+1}}}}\in [0,1]
\]

 \item $T$ is the total number of iterations.
 \item $t_j$ in (\ref{eq:fw_discrete3}) is an increasing function of $j$ and indicates the time value of $t$ at the start of the $j$-th iteration, satisfying the boundary conditions
 \[
 t_0=0, t_T=\frac{1}{1-\ln 2}-1=\frac{\ln 2}{1-\ln 2}\approx 2.258891.
 \]
  \item The solution at each step is feasible as it is the convex combination of feasible solutions $x(t_j), v(t_j)\in P$.
 \end{enumerate}

\subsection{Analysis}\label{sec:dis_analysis}

\subsubsection{Two lemmas}\label{subsec:lemmas_discrete}
We first bound $x_i(t_j)$ from above.
\begin{lem}\label{lem:dis_x_coor} The algorithm maintains the following inequality:
\[
1-x_i(t_{j})\ge e^{-\frac{t_{j}}{1+t_{j}}}=\frac{1}{\sqrt{a_{t_j}}}, i=1,\ldots, n; j=0, 1,\ldots, T.
\]
\end{lem}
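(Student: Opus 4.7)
The plan is a straightforward induction on $j$, exploiting the fact that the update in \eqref{eq:fw_discrete1} is a convex combination of $x(t_j)$ and a point $v(x(t_j))\in P\subseteq[0,1]^n$, so each coordinate of $x(t_{j+1})$ is a weighted average of two numbers that lie in $[0,1]$.

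For the base case $j=0$ I use $x(t_0)=\mathbf{0}$ and $t_0=0$, which gives $1-x_i(t_0)=1=e^{0}=e^{-t_0/(1+t_0)}$. For the inductive step, abbreviate the step-size in \eqref{eq:fw_discrete1} by
\[
\lambda_j := e^{-\frac{1}{1+t_j}+\frac{1}{1+t_{j+1}}}\in[0,1],
\]
so $x_i(t_{j+1})=\lambda_j x_i(t_j)+(1-\lambda_j)v_i(x(t_j))$. Since $v_i(x(t_j))\le 1$, taking $1$ minus both sides yields
\[
1-x_i(t_{j+1}) \;=\; \lambda_j\bigl(1-x_i(t_j)\bigr) + (1-\lambda_j)\bigl(1-v_i(x(t_j))\bigr) \;\ge\; \lambda_j\bigl(1-x_i(t_j)\bigr).
\]
Applying the inductive hypothesis $1-x_i(t_j)\ge e^{-t_j/(1+t_j)}$ and simplifying the exponent gives
\[
\lambda_j\, e^{-\frac{t_j}{1+t_j}} \;=\; \exp\!\left(-\frac{1}{1+t_j}+\frac{1}{1+t_{j+1}}-\frac{t_j}{1+t_j}\right) \;=\; \exp\!\left(-1+\frac{1}{1+t_{j+1}}\right) \;=\; e^{-\frac{t_{j+1}}{1+t_{j+1}}},
\]
which closes the induction and matches $1/\sqrt{a_{t_{j+1}}}$ by the definition of $\sqrt{a_{t_j}}$ in \eqref{eq:fw_discrete_sqrt_a_t}.

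There is no real obstacle here: the bound is essentially a telescoping identity baked into the definition of the step-size $\lambda_j$, and the only things I actually use are (a) feasibility of $v(x(t_j))$ (hence $v_i\le 1$), (b) the initial condition $x(0)=\mathbf{0}$, and (c) the algebraic identity above. The step-size in the algorithm was evidently chosen precisely so that this coordinate-wise ``distance from the upper boundary'' contracts exactly by the factor $e^{-t_{j+1}/(1+t_{j+1})}/e^{-t_j/(1+t_j)}$ at each iteration, and that is what makes the induction work with equality in the worst case.
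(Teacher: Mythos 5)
Your proof is correct and follows essentially the same route as the paper: both hinge on the one-step bound $1-x_i(t_{j+1})\ge e^{-\frac{1}{1+t_j}+\frac{1}{1+t_{j+1}}}\bigl(1-x_i(t_j)\bigr)$ obtained from $v_i(x(t_j))\le 1$, and then telescope the exponents back to $t_0=0$; the paper unrolls the recursion as a product while you phrase it as an induction, which is only a cosmetic difference.
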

\begin{proof}
From (\ref{eq:fw_discrete1}) in the algorithm, we have
\begin{eqnarray}
\nonumber 1-x_i(t_{j+1})&=&1-e^{-\frac{1}{1+t_j}+\frac{1}{1+t_{j+1}}}x_i(t_j)-\left(1-e^{-\frac{1}{1+t_j}+\frac{1}{1+t_{j+1}}}\right)v(x(t_j))\\
\nonumber &\overset{v(x(t_j))\in P\Longrightarrow v_i(x(t_j))\le 1}{\ge}&
1-e^{-\frac{1}{1+t_j}+\frac{1}{1+t_{j+1}}} x(t_j)-1+e^{-\frac{t_{j+1}}{1+t_{j+1}}+\frac{t_j}{1+t_j}} \\
\nonumber &=&-e^{-\frac{1}{1+t_j}+\frac{1}{1+t_{j+1}}} x(t_j)+e^{\frac{1}{1+t_{j+1}}-\frac{1}{1+t_j}} \\
\nonumber &=&e^{-\frac{1}{1+t_j}+\frac{1}{1+t_{j+1}}}(1-x(t_j))\\
\nonumber &\ge &e^{\sum_{\ell=0}^j\left[-\frac{1}{1+t_j}+\frac{1}{1+t_{j+1}}\right]}(1-x(t_0))\\
\label{eq:sub_max_non_mono_x_coordinate_discrete}&=&e^{-\frac{t_{j+1}}{1+t_{j+1}}}=e^{-\frac{t_{j+1}}{1+t_{j+1}}}=\frac{1}{\sqrt{a_{t_{j+1}}}}
\end{eqnarray}
\end{proof}

We recall the following known result~\cite[Lemma 3.5]{feldman2011unified}, which is a consequence of the DR-submodularity of $F$. For any DR-submodular function $F: P\mapsto\mathbb{R}^{+}$, and $\forall x, x^*\in P$, where $P\subseteq [0,1]^n$, we have
\begin{equation}
\label{eq:sub_max_non_mono_theta}F(x\vee x^*)\ge\left(1-\|x\|_{\infty}\right)F(x^*)=\left(1-\max_{i=1}^nx_i(t)\right)F(x^*)
 \end{equation}

 With (\ref{eq:sub_max_non_mono_x_coordinate_discrete})-(\ref{eq:sub_max_non_mono_theta}), we are able to bound $F(x(t)\vee x^*)$ from below, which is summarized in the following lemma.
 \begin{lem}\label{lemma:sub_max_non_mono_f_x_vee_x_star_dis}
\begin{equation}\label{eq:fx_meet_opt_lb_discrete}
F(x(t_j)\vee x^*)\ge e^{-\frac{t_j}{1+t_j}}F(x^*)=\frac{1}{\sqrt{a_{t_j}}}F(x^*)
\end{equation}
\end{lem}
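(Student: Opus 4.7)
The plan is to combine the coordinate-wise upper bound from Lemma~\ref{lem:dis_x_coor} with the inequality (\ref{eq:sub_max_non_mono_theta}) recalled from \cite{feldman2011unified}. Concretely, Lemma~\ref{lem:dis_x_coor} yields $x_i(t_j) \le 1 - e^{-t_j/(1+t_j)}$ for every coordinate $i$, and this bound is uniform in $i$, so it immediately converts into a bound on $\|x(t_j)\|_\infty$.

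First I would write, for each $i \in \{1,\dots,n\}$,
\[
x_i(t_j) \;\le\; 1 - \frac{1}{\sqrt{a_{t_j}}},
\]
and then take the maximum over $i$ to conclude
\[
\|x(t_j)\|_\infty \;=\; \max_{i=1,\dots,n} x_i(t_j) \;\le\; 1 - \frac{1}{\sqrt{a_{t_j}}}.
\]
Equivalently, $1 - \|x(t_j)\|_\infty \ge e^{-t_j/(1+t_j)} = 1/\sqrt{a_{t_j}}$.

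Next, I would plug this coordinate-$\infty$-norm estimate directly into the recalled inequality (\ref{eq:sub_max_non_mono_theta}) applied at $x = x(t_j)$, obtaining
\[
F(x(t_j) \vee x^*) \;\ge\; \bigl(1 - \|x(t_j)\|_\infty\bigr) F(x^*) \;\ge\; \frac{1}{\sqrt{a_{t_j}}}\, F(x^*),
\]
which is exactly the claim of (\ref{eq:fx_meet_opt_lb_discrete}).

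There is no substantial obstacle: the proof is a one-line composition once the two ingredients are in hand. The only thing to be mindful of is the direction of the inequality when going from the per-coordinate bound of Lemma~\ref{lem:dis_x_coor} to a bound on the $\ell_\infty$-norm (taking $\max_i$ weakens the bound in the right direction, since the lower bound on $1 - x_i$ is uniform in $i$), and noting that $F(x^*) \ge 0$ is needed to preserve the inequality upon multiplication, which holds by assumption (iii) that $F$ is non-negative on $[0,1]^n$.
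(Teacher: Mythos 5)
Your proof is correct and follows essentially the same route as the paper: apply the bound $F(x(t_j)\vee x^*)\ge (1-\max_i x_i(t_j))F(x^*)$ from (\ref{eq:sub_max_non_mono_theta}) and then the coordinate-wise estimate of Lemma~\ref{lem:dis_x_coor}. Your extra remarks about taking the maximum over $i$ and using $F(x^*)\ge 0$ are just a more explicit spelling-out of the same two-step argument.
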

\begin{proof}
\begin{eqnarray*}
F(x(t_j)\vee x^*)&\overset{(\ref{eq:sub_max_non_mono_theta})}{\ge} &\left(1-\max_{i=1}^nx_i(t_j)\right)F(x^*)\\
&\overset{(\ref{eq:sub_max_non_mono_x_coordinate_discrete})}{\ge}&e^{-\frac{t_j}{1+t_j}}
F(x^*)=\frac{1}{\sqrt{a_{t_j}}}F(x^*)
 \end{eqnarray*}
 \end{proof}

 \subsubsection{Bounding the increment of the Lyapunov function from below}\label{subsec:E_t_discrete}

We introduce the following Lyapunov function analyze the approximation ratio and the convergence rate of Algorithm \textsc{FW-Dis}:
\begin{eqnarray}
\label{eq:fw_discrete_Lyapunov_function} E(t_j) &=&a_{t_j}F(x (t_j))-\sqrt{a_{t_j}}F(x^*), j=0,1\ldots, T,
\end{eqnarray}

We show that the increment of Lyapunov-like function in (\ref{eq:fw_discrete_Lyapunov_function}) at each iteration is bounded from below.

Below we will use the following known bound which is implied by any DR-submodular function~\citep{hassani2017gradient}.  For any DR-submodular function $F: P\mapsto\mathbb{R}^{+}$, where $P\subseteq [0,1]^n$, we have
\begin{equation}\label{eq:dr_lb}
\langle\nabla F(x), y-x\rangle \ge F(x \vee y)+F(x \wedge y)-2 F(x), \forall x, y\in P
\end{equation}

\begin{proof}
We can always assume that $E(t_j) \le 0$ is true for $\forall j \in [T-1]$, otherwise we can notice that $F(x(t_j)) \ge \frac{1}{\sqrt{a_{t_j}}} F(x^\ast) > \frac{1}{4}F(x^\ast)$ for some $j \in [T-1]$ so that the approximation guarantee is already achieved.
\begin{eqnarray}
\nonumber&&E(t_{j+1})-E(t_j)\\
\nonumber&=&a_{t_{j+1}}F(x(t_{j+1}))-a_{t_j}F(x(t_j))-\left[\sqrt{a_{t_{j+1}}}-\sqrt{a_{t_j}}\right]F(x^*)\\
\nonumber&=&a_{t_{j+1}}\left[F(x(t_{j+1}))-F(x(t_j))\right]+[a_{t_{j+1}}-a_{t_j}]F(x(t_j))-\left[\sqrt{a_{t_{j+1}}}-\sqrt{a_{t_j}}\right]F(x^*)\\
\nonumber&\overset{L-\text{smooth}}{\ge}&a_{t_{j+1}}\left[\left(1-\sqrt{\frac{a_{t_j}}{a_{t_{j+1}}}}\right)\langle \nabla F(x(t_j)), v(x(t_j)-x(t_j))\rangle-\frac{L}{2}\left(1-\sqrt{\frac{a_{t_j}}{a_{t_{j+1}}}}\right)^2\|v(x(t_j)-x(t_j))\|^2 \right]\\
\nonumber&&+[a_{t_{j+1}}-a_{t_j}]F(x(t_j))-\left[\sqrt{a_{t_{j+1}}}-\sqrt{a_{t_j}}\right]F(x^*)\\
\nonumber&\overset{\|v(x(t_j))-x(t_j)\|^2\le n}{\ge}&a_{t_{j+1}}\left[\left(1-\sqrt{\frac{a_{t_j}}{a_{t_{j+1}}}}\right)\langle \nabla F(x(t_j)), v(x(t_j)-x(t_j))\rangle-\frac{nL}{2}\left(1-\sqrt{\frac{a_{t_j}}{a_{t_{j+1}}}}\right)^2\right]\\
\nonumber&&+[a_{t_{j+1}}-a_{t_j}]F(x(t_j))-\left[\sqrt{a_{t_{j+1}}}-\sqrt{a_{t_j}}\right]F(x^*)
% \nonumber&\overset{(\ref{eq:fw_discrete2})}{\ge}&a_{t_{j+1}}\left[\left(1-\sqrt{\frac{a_{t_j}}{a_{t_{j+1}}}}\right)\langle \nabla F(x(t_j)), x^*-x(t_j))\rangle-\frac{nL}{2}\left(1-\sqrt{\frac{a_{t_j}}{a_{t_{j+1}}}}\right)^2\right]\\
% \nonumber&&+[a_{t_{j+1}}-a_{t_j}]F(x(t_j))-\left[\sqrt{a_{t_{j+1}}}-\sqrt{a_{t_j}}\right]F(x^*)\\
% \nonumber&\overset{(\ref{eq:dr_lb})}{\ge}&a_{t_{j+1}}\left[\left(1-\sqrt{\frac{a_{t_j}}{a_{t_{j+1}}}}\right)\left(F(x^*\vee x(t_j))- 2F(x(t_j))\right)-\frac{nL}{2}\left(1-\sqrt{\frac{a_{t_j}}{a_{t_{j+1}}}}\right)^2\right]\\
% \nonumber&\overset{(\ref{eq:fx_meet_opt_lb_discrete})}{\ge}&a_{t_{j+1}}\left[\left(1-\sqrt{\frac{a_{t_j}}{a_{t_{j+1}}}}\right)\left(\frac{1}{\sqrt{a_{t_j}}}F(x^*)- 2F(x(t_j))\right)-\frac{nL}{2}\left(1-\sqrt{\frac{a_{t_j}}{a_{t_{j+1}}}}\right)^2\right]\\
% \nonumber&&+[a_{t_{j+1}}-a_{t_j}]F(x(t_j))-\left[\sqrt{a_{t_{j+1}}}-\sqrt{a_{t_j}}\right]F(x^*)\\
% \nonumber&=&\sqrt{a_{t_j}}\left[\left(1-\sqrt{\frac{a_{t_j}}{a_{t_{j+1}}}}\right)\frac{a_{t_{j+1}}}{a_{t_j}}-\sqrt{\frac{a_{t_{j+1}}}{a_{t_j}}}+1\right]F(x^*)+a_{t_j}\left[-2\sqrt{\frac{a_{t_{j+1}}}{a_{t_j}}}+\frac{a_{t_{j+1}}}{a_{t_j}}+1\right]F(x(t_j))\\
% \nonumber&&-a_{t_{j+1}}\frac{nL}{2}\left(1-\sqrt{\frac{a_{t_j}}{a_{t_{j+1}}}}\right)^2
\end{eqnarray}

\begin{eqnarray}
% \nonumber&&E(t_{j+1})-E(t_j)\\
% \nonumber&=&a_{t_{j+1}}F(x(t_{j+1}))-a_{t_j}F(x(t_j))-\left[\sqrt{a_{t_{j+1}}}-\sqrt{a_{t_j}}\right]F(x^*)\\
% \nonumber&=&a_{t_{j+1}}\left[F(x(t_{j+1}))-F(x(t_j))\right]+[a_{t_{j+1}}-a_{t_j}]F(x(t_j))-\left[\sqrt{a_{t_{j+1}}}-\sqrt{a_{t_j}}\right]F(x^*)\\
% \nonumber&\overset{L-\text{smooth}}{\ge}&a_{t_{j+1}}\left[\left(1-\sqrt{\frac{a_{t_j}}{a_{t_{j+1}}}}\right)\langle \nabla F(x(t_j)), v(x(t_j)-x(t_j))\rangle-\frac{L}{2}\left(1-\sqrt{\frac{a_{t_j}}{a_{t_{j+1}}}}\right)^2\|v(x(t_j)-x(t_j))\|^2 \right]\\
% \nonumber&&+[a_{t_{j+1}}-a_{t_j}]F(x(t_j))-\left[\sqrt{a_{t_{j+1}}}-\sqrt{a_{t_j}}\right]F(x^*)\\
% \nonumber&\overset{\|v(x(t_j))-x(t_j)\|^2\le n}{\ge}&a_{t_{j+1}}\left[\left(1-\sqrt{\frac{a_{t_j}}{a_{t_{j+1}}}}\right)\langle \nabla F(x(t_j)), v(x(t_j)-x(t_j))\rangle-\frac{nL}{2}\left(1-\sqrt{\frac{a_{t_j}}{a_{t_{j+1}}}}\right)^2\right]\\
% \nonumber&&+[a_{t_{j+1}}-a_{t_j}]F(x(t_j))-\left[\sqrt{a_{t_{j+1}}}-\sqrt{a_{t_j}}\right]F(x^*)\\
\nonumber&\overset{(\ref{eq:fw_discrete2})}{\ge}&a_{t_{j+1}}\left[\left(1-\sqrt{\frac{a_{t_j}}{a_{t_{j+1}}}}\right)\langle \nabla F(x(t_j)), x^*-x(t_j))\rangle-\frac{nL}{2}\left(1-\sqrt{\frac{a_{t_j}}{a_{t_{j+1}}}}\right)^2\right]\\
\nonumber&&+[a_{t_{j+1}}-a_{t_j}]F(x(t_j))-\left[\sqrt{a_{t_{j+1}}}-\sqrt{a_{t_j}}\right]F(x^*)\\
\nonumber&\overset{(\ref{eq:dr_lb})}{\ge}&a_{t_{j+1}}\left[\left(1-\sqrt{\frac{a_{t_j}}{a_{t_{j+1}}}}\right)\left(F(x^*\vee x(t_j))- 2F(x(t_j))\right)-\frac{nL}{2}\left(1-\sqrt{\frac{a_{t_j}}{a_{t_{j+1}}}}\right)^2\right]\\
\nonumber&\overset{(\ref{eq:fx_meet_opt_lb_discrete})}{\ge}&a_{t_{j+1}}\left[\left(1-\sqrt{\frac{a_{t_j}}{a_{t_{j+1}}}}\right)\left(\frac{1}{\sqrt{a_{t_j}}}F(x^*)- 2F(x(t_j))\right)-\frac{nL}{2}\left(1-\sqrt{\frac{a_{t_j}}{a_{t_{j+1}}}}\right)^2\right]\\
\nonumber&&+[a_{t_{j+1}}-a_{t_j}]F(x(t_j))-\left[\sqrt{a_{t_{j+1}}}-\sqrt{a_{t_j}}\right]F(x^*)\\
\nonumber&=&\sqrt{a_{t_j}}\left[\left(1-\sqrt{\frac{a_{t_j}}{a_{t_{j+1}}}}\right)\frac{a_{t_{j+1}}}{a_{t_j}}-\sqrt{\frac{a_{t_{j+1}}}{a_{t_j}}}+1\right]F(x^*) - a_{t_j}\left[-2\sqrt{\frac{a_{t_{j+1}}}{a_{t_j}}}+\frac{a_{t_{j+1}}}{a_{t_j}}+1\right]F(x(t_j))\\
\nonumber&&-a_{t_{j+1}}\frac{nL}{2}\left(1-\sqrt{\frac{a_{t_j}}{a_{t_{j+1}}}}\right)^2\\
\nonumber&=&\sqrt{a_{t_j}}\left[\frac{a_{t_{j+1}}}{a_{t_j}}-2\sqrt{\frac{a_{t_{j+1}}}{a_{t_j}}}+1\right]F(x^*) - a_{t_j}\left[\frac{a_{t_{j+1}}}{a_{t_j}}-2\sqrt{\frac{a_{t_{j+1}}}{a_{t_j}}}+1\right]F(x(t_j))\\
\nonumber&&-\frac{nL}{2}\left(\sqrt{a_{t_{j+1}}}-\sqrt{a_{t_j}}\right)^2\\
\nonumber&=&\sqrt{a_{t_j}}\left(\sqrt{\frac{a_{t_{j+1}}}{a_{t_j}}}-1\right)^2 \left(F(x^*) -\sqrt{a_{t_j}}F(x(t_j))\right)-\frac{nL}{2}\left(\sqrt{a_{t_{j+1}}}-\sqrt{a_{t_j}}\right)^2\\
\label{eq:e_t_dis_lb}
&\overset{E(x(t_j)) \le 0}\ge& -\frac{nL}{2}\left(\sqrt{a_{t_{j+1}}}-\sqrt{a_{t_j}}\right)^2
\end{eqnarray}

\end{proof}

\subsubsection{Approximation ratio and cconvergence rate}\label{subsec:approx_discrette}
\begin{thm}\label{thm:main_result_discrete} Algorithm \textsc{FW-Dis} outputs a solution $x_{t_T}$ satisfying
\[
F(x(t_T))\ge \frac{1}{4}F(x^*)-O\left(\frac{nL}{\ln^2 T} \right)
\]
after $T$ iterations. Consequently,
\[
F(x(t_T))\ge \frac{1}{4}F(x^*)-\varepsilon
\]
after $T=O\left(e^{\sqrt{\frac{nL}{\varepsilon}}}\right)$ iterations.
\end{thm}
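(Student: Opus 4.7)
The plan is to telescope the one-step bound (\ref{eq:e_t_dis_lb}) on the Lyapunov increment and then read off the approximation ratio from the boundary values of $E$ at $t_0$ and $t_T$. The two case-splits from the previous lemma (either $E(t_j) \le 0$ throughout, or $E(t_j) > 0$ at some iteration) will correspond to the two ways the $1/4$-guarantee is attained.

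First I would handle the easy case: if $E(t_j) > 0$ for some $j \in \{0, 1, \ldots, T\}$, then by definition of $E$, $F(x(t_j)) > \frac{1}{\sqrt{a_{t_j}}} F(x^*)$. Since $\sqrt{a_{t_j}} = 1 + H_j/H_T \le 2$ for all $j \le T$, this already yields $F(x(t_j)) \ge \frac{1}{2} F(x^*) > \frac{1}{4} F(x^*)$, and since the algorithm returns the best iterate, we are done. So from here on assume $E(t_j) \le 0$ for every $j$, which is exactly the premise used in (\ref{eq:e_t_dis_lb}).

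Next I would telescope: summing (\ref{eq:e_t_dis_lb}) from $j=0$ to $j = T-1$ gives
\[
E(t_T) - E(t_0) \ge -\frac{nL}{2} \sum_{j=0}^{T-1} \left(\sqrt{a_{t_{j+1}}} - \sqrt{a_{t_j}}\right)^2.
\]
The boundary values are easy: $t_0 = 0$ gives $\sqrt{a_{t_0}} = 1$ and $x(0) = \mathbf{0}$, so $E(t_0) = -F(x^*)$; and $\sqrt{a_{t_T}} = 1 + H_T/H_T = 2$, so $a_{t_T} = 4$ and $E(t_T) = 4 F(x(t_T)) - 2 F(x^*)$. Using (\ref{eq:fw_discrete_sqrt_a_t}), $\sqrt{a_{t_{j+1}}} - \sqrt{a_{t_j}} = (H_{j+1} - H_j)/H_T = 1/((j+1) H_T)$, so
\[
\sum_{j=0}^{T-1} \left(\sqrt{a_{t_{j+1}}} - \sqrt{a_{t_j}}\right)^2 = \frac{1}{H_T^2} \sum_{j=1}^{T} \frac{1}{j^2} = \frac{H_{2,T}}{H_T^2} = \Theta\!\left(\frac{1}{\ln^2 T}\right)
\]
by (\ref{eq:fw_discrete_harmonic1})--(\ref{eq:fw_discrete_harmonic2}).

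Plugging these into the telescoped bound yields
\[
4 F(x(t_T)) - 2 F(x^*) + F(x^*) \ge -\frac{nL}{2} \cdot \frac{H_{2,T}}{H_T^2},
\]
which rearranges to $F(x(t_T)) \ge \frac{1}{4} F(x^*) - O(nL/\ln^2 T)$, giving the first claim. For the convergence rate, setting $O(nL/\ln^2 T) \le \varepsilon$ and solving for $T$ yields $T = O\!\left(e^{\sqrt{nL/\varepsilon}}\right)$, which is the second claim. No step is really an obstacle here once (\ref{eq:e_t_dis_lb}) is in hand; the only mildly delicate point is verifying that $\sqrt{a_{t_T}} = 2$ (hence the denominator $4$) and keeping track of the two separate routes to the $1/4$ bound.
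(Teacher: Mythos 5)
Your proposal is correct and follows essentially the same route as the paper: telescope the one-step Lyapunov bound, evaluate $E(t_0)=-F(x^*)$ and $E(t_T)=4F(x(t_T))-2F(x^*)$ using $\sqrt{a_{t_T}}=2$, compute the error sum as $\frac{H_{2,T}}{H_T^2}=\Theta(1/\ln^2 T)$, and invert to get $T=O\bigl(e^{\sqrt{nL/\varepsilon}}\bigr)$. The only difference is cosmetic: you fold the case $E(t_j)>0$ for some $j$ into the theorem's proof, whereas the paper disposes of that case inside the proof of the one-step bound; both handle it identically in substance.
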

\begin{proof} When Algorithm \textsc{FW-Dis} terminates, we have
\[
t_T=\frac{1}{1-\ln 2}-1\approx 2.258891,
\]
and hence
\[
a_0=1, \sqrt{a_{t_T}}=e^{\frac{t_T}{1+t_T}}=e^{\frac{\left(\frac{1}{1-\ln 2}-1\right)}{\frac{1}{1-\ln 2}}}=2.
\]
Therefore summing up inequality (\ref{eq:e_t_dis_lb}) over $j=0,1,\ldots, T$, we obtain
\begin{eqnarray*}
E(t_T)-E(t_0)&\overset{F(0)=0}{=}&a_{t_T}F(x(t_T))-(\sqrt{a_{t_T}}-1)F(x^*)\\
&=&4F(x(t_T))-F(x^*)\\
&\overset{(\ref{eq:e_t_dis_lb})}{\ge}& -\frac{nL}{2}\sum_{j=0}^{T-1}\left(\sqrt{a_{t_{j+1}}}-\sqrt{a_{t_j}}\right)^2\\
&\overset{(\ref{eq:fw_discrete_sqrt_a_t})}{=}&-\frac{nL}{2}\sum_{j=0}^{T-1}\left(1+\frac{H_{j+1}}{H_T}-1-\frac{H_{j}}{H_T}\right)^2=-\frac{nL}{2}\frac{1}{H^2_T}\sum_{j=1}^{T}\frac{1}{j^2}=-\frac{nL}{2}\frac{H_{2,T}}{H^2_{T}}\\
&\overset{(\ref{eq:fw_discrete_harmonic1})-(\ref{eq:fw_discrete_harmonic2})}{=}&-O\left(\frac{nL}{\ln^2 T} \right).
\end{eqnarray*}
So we have
\[
F(x(t_T))\ge \frac{1}{4}F(x^*)-O\left(\frac{nL}{\ln^2 T} \right).
\]
Consequently,
\[
F(x(t_T))\ge \frac{1}{4}F(x^*)-\varepsilon,
\]
after $T=O\left(e^{\sqrt{\frac{nL}{\varepsilon}}}\right)$ iterations..
\end{proof}

As another corollary, if $F$ is the multilinear extension of a submodular function, then \citet[Lemma 2.3.7.]{feldman2013maximization} shows that $F$ is $L$-smooth with
\[
L=O\left(n^2\right)F(x^*)
\]
Plugging this $L$ into the above, we have
\[
F(x(t_T))\ge \left(\frac{1}{4}-O\left(\frac{n^3}{\ln^2 T}\right)\right)F(x^*).
\]
Consequently,
\[
F(x(t_T))\ge \left(\frac{1}{4}-\varepsilon\right)F(x^*)
\]
after $T=O\left(e^{\sqrt{\frac{n^3}{\varepsilon}}}\right)$ iterations.

\section{Conclusions}\label{sec:conclusion}

The assumptions of $0\in P$ and $F(0)=0$ can both be relaxed. Instead of initializing both algorithms with the starting point $x(0)=\mathbf{0}$, we use $x(0)=\arg\min_{x\in P}\|x\|_{\infty}$. Then almost the same arguments can be carried to yield the following starting-point-dependent approximation ratio. We have
\[
F(x(t_T))-\frac{1}{4}F(0)\ge \frac{1}{4}\left(1-\min_{x\in P}\|x\|_{\infty}\right)F(x^*)-O\left(\frac{nL}{\ln^2 T} \right).
\]

% The assumptions of $0\in P$ and $P\subseteq [0,1]^n$ can both be relaxed and almost the same arguments can be carried to yield the following more general results fro botht he continuous and discrete settings.
%
% \begin{thm}\label{thm:main_result_continuous} Algorithm \textsc{FW-Con} outputs a solution $x_{T}$ satisfying
% \[
% F(x(T))\ge \frac{1}{4}F(x^*).
% \]
% \end{thm}
%
% \begin{thm}\label{thm:main_result_discrete_general} Algorithm \textsc{FW-Dis} outputs a solution $x_{t_T}$ satisfying
% \[
% F(x(t_T))\ge \frac{1}{4}F(x^*)-O\left(\frac{nL}{\ln^2 T} \right)
% \]
% after $T$ iterations. Consequently,
% \[
% F(x(t_T))\ge \frac{1}{4}F(x^*)-\varepsilon
% \]
% after $T=O\left(e^{\sqrt{\frac{nL}{\varepsilon}}}\right)$ iterations.
% \end{thm}

An open question is whether the approximation $1/4$ in (\ref{eq:intro_main}) is the best possible with a sub-exponential convergent rate?

%%%%%%%%%%%%%%%%%%%%%%%%%%%%%%%%%%%%%%%%%%%%%%%%%%%%%%%%%%%%%%%%%%%%%%%%%%%%%%%%%%%%%%%%%%%%%%
% Acknowledgments
%%%%%%%%%%%%%%%%%%%%%%%%%%%%%%%%%%%%%%%%%%%%%%%%%%%%%%%%%%%%%%%%%%%%%%%%%%%%%%%%%%%%%%%%%%%%%%

 \section*{Acknowledgements} The first author's research is supported by the NSERC grant (No. 283106), and NSFC grants (Nos. 11771386 and  11728104). The third author's research is supported by the NSFC grants (No. 11971349). The fourth author's research is supported by the NSFC grants (No. 12131003) and Beijing Natural Science Foundation Project (No. Z200002).   The fifth author's research is supported by the NSFC grants (No. 12001335).
%\section*{References}

\bibliographystyle{apalike}

\bibliography{mybibfile}

\end{document}